\definecolor{myurlcolor}{rgb}{0,0,0.7}
\newcommand{\gt}{>}
\newcommand{\maps}{\colon}
\newtheorem{theorem}{Theorem}
\newtheorem{corol}[theorem]{Corollary}
\newtheorem{defn}[theorem]{Definition}
\newcommand{\FinProb}{\mathtt{FinProb}}
\newcommand{\FinMeas}{\mathtt{FinMeas}}
\title{A Characterization of Entropy in Terms of Information Loss} 
\author{John C. Baez}
\address{Department of Mathematics\\ 
University of California\\ 
Riverside CA 92521\\
and Centre for Quantum Technologies\\ 
National University of Singapore\\ 
Singapore 117543}
\email{baez@math.ucr.edu}
\author{Tobias Fritz}
\address{ICFO -- Institut de Ci\`{e}nces Fot\`{o}niques\\ 
Mediterranean Technology Park\\ 
08860 Castelldefels (Barcelona)\\ 
Spain}
\email{tobias.fritz@icfo.es}
\author{Tom Leinster} 
\address{School of Mathematics and Statistics\\ 
University of Glasgow\\ 
Glasgow G12 8QW\\ 
UK\\ 
and Boyd Orr Centre for Population and Ecosystem Health\\ 
University of Glasgow.}
\email{tom.leinster@glasgow.ac.uk}
\keywords{Shannon entropy, Tsallis entropy, information theory, measure-preserving
function} 
\subjclass[2010]{Primary: 94A17, Secondary: 62B10}
\thanks{We thank the denizens of the $n$-Category Caf\'e, especially David
Corfield, Steve Lack, Mark Meckes and Josh Shadlen, for encouragement and
helpful suggestions. Tobias Fritz is supported by the EU STREP QCS. Tom
Leinster is supported by an EPSRC Advanced Research Fellowship.}
\begin{document}

\maketitle

\begin{abstract}
There are numerous characterizations of Shannon entropy and Tsallis entropy as
measures of information obeying certain properties. Using work by Faddeev and
Furuichi, we derive a very simple characterization. Instead of focusing on the
entropy of a probability measure on a finite set, this characterization
focuses on the `information loss', or change in entropy, associated with a
measure-preserving function.    Information loss is a special case of conditional entropy: namely, it is the entropy of a random variable conditioned on some function of that variable.  We show that Shannon entropy gives the only
concept of information loss that is functorial, convex-linear and continuous.
This characterization naturally generalizes to Tsallis entropy as well.
\end{abstract}

\hypertarget{introduction}{}%
\section{{Introduction}}
\label{introduction}

The Shannon entropy~\cite{S} of a probability measure $p$ on a finite set $X$
is given by:
\[
H(p) = -\sum_{i \in X} p_i \ln(p_i).
\]
There are many theorems that seek to characterize Shannon entropy starting
from plausible assumptions; see for example the book by Acz\'el and
Dar\'oczy~\cite{AD}. Here we give a new and very simple characterization
theorem. The main novelty is that we do not focus directly on the entropy of a
single probability measure, but rather, on the \emph{change} in entropy
associated with a measure-preserving function.  The entropy of a single
probability measure can be recovered as the change in entropy of the unique
measure-preserving function onto the one-point space.

A measure-preserving function can map several points to the same point, but
not vice versa, so this change in entropy is always a \emph{decrease}. Since
the second law of thermodynamics speaks of entropy \emph{increase}, this may
seem counterintuitive. It may seem less so if we think of the function as some
kind of data processing that does not introduce any additional randomness. 
Then the entropy can only decrease, and we can talk about the `information
loss' associated with the function.

Some examples may help to clarify this point. Consider the only
possible map $f\maps \{a,b\} \to \{c\}$. Suppose $p$ is the
probability measure on $\{a,b\}$ such that each point has measure
$1/2$, while $q$ is the unique probability measure on the set
$\{c\}$. Then $H(p) = \ln 2$, while $H(q) = 0$. The information loss
associated with the map $f$ is defined to be $H(p) - H(q)$, which in
this case equals $\ln 2$. In other words, the measure-preserving map
$f$ loses one bit of information.

On the other hand, $f$ is also measure-preserving if we replace $p$ by
the probability measure $p'$ for which $a$ has measure $1$ and $b$ has
measure $0$.  Since $H(p') = 0$, the function $f$ now has information
loss $H(p') - H(q) = 0$. It may seem odd to say that $f$ loses no
information: after all, it maps $a$ and $b$ to the the same point.  However, 
because the point $b$ has probability zero with respect to $p'$, knowing 
that $f(x) = c$ lets us conclude that $x = a$ with probability one.

The shift in emphasis from probability measures to
measure-preserving \emph{functions} suggests that it will be useful to
adopt the perspective of category theory~\cite{M}, where one has objects
and \emph{morphisms} between them.  However, the reader
need only know the definition of `category' to understand this paper.

Our main result is that Shannon entropy has a very simple
characterization in terms of information loss. To state it, we
consider a category where a morphism $f\maps p \to q$ is a
measure-preserving function between finite sets equipped with
probability measures. We assume $F$ is a function that assigns to any
such morphism a number $F(f) \in [0,\infty)$, which we call its
\textbf{information loss}. We also assume that $F$ obeys three
axioms. If we call a morphism a `process' (to be thought of as deterministic),
we can state these roughly 
in words as follows.  For the precise statement, including all the
definitions, see Section~\ref{main_result}.
 
\begin{enumerate} \item \textbf{Functoriality}. Given a process consisting of
two stages, the amount of information lost in the whole process is the sum of
the amounts lost at each stage: 
\[ F(f \circ g) = F(f) + F(g) .  \]

\item \textbf{Convex linearity}. If we flip a probability-$\lambda$ coin to
decide whether to do one process or another, the information lost is $\lambda$
times the information lost by the first process plus $(1 - \lambda)$ times the
information lost by the second:
\[
F(\lambda f \oplus (1 - \lambda) g) = \lambda F(f) + (1 - \lambda) F(g).
\]

\item \textbf{Continuity}. If we change a process slightly, the information
lost changes only slightly: $F(f)$ is a continuous function of $f$.
\end{enumerate}

\noindent Given these assumptions, we conclude that there exists a constant $c
\ge 0$ such that for any $f\maps p \to q$, we have
\[
F(f) = c(H(p) - H(q)) .
\]
The charm of this result is that the first two hypotheses look like 
linear conditions, and none of the hypotheses hint at any special role 
for the function $- p \ln p$, but it emerges in the conclusion. The 
key here is a result of Faddeev~\cite{Fa} described in Section~\ref{faddeev}. 

For many scientific purposes, \emph{probability} measures are not enough. Our
result extends to general measures on finite sets, as follows. Any measure on
a finite set can be expressed as $\lambda p$ for some scalar $\lambda$ and
probability measure $p$, and we define $H(\lambda p) = \lambda H(p)$. In this
more general setting, we are no longer confined to taking \emph{convex} linear
combinations of measures. Accordingly, the convex linearity condition in our
main theorem is replaced by two conditions: additivity ($F(f \oplus g) = F(f)
+ F(g)$) and homogeneity ($F(\lambda f) = \lambda F(f)$). As before, the
conclusion is that, up to a multiplicative constant, $F$ assigns to each
morphism $f\maps p \to q$ the information loss $H(p) - H(q)$.

It is natural to wonder what happens when we replace the homogeneity axiom
$F(\lambda f) = \lambda F(f)$ by a more general homogeneity condition: 
\[
F(\lambda f) = \lambda^\alpha F(f)
\]
for some number $\alpha \gt 0$. In this case we find that $F(f)$ is
proportional to $H_\alpha(p) - H_\alpha(q)$, where $H_\alpha$ is the so-called
Tsallis entropy of order $\alpha$.

\hypertarget{main_result}{}%
\section{{The main result}}
\label{main_result}

We work with finite sets equipped with probability measures. All measures on a
finite set $X$ will be assumed nonnegative and defined on the $\sigma$-algebra
of all subsets of $X$.  Any such measure is determined by its values on
singletons, so we will think of a probability measure $p$ on $X$ as an
$X$-tuple of numbers $p_i \in [0,1]$ ($i \in X$) satisfying $\sum p_i = 1$.

\begin{defn}
\label{FinProb}\hypertarget{FinProb}{}
Let $\FinProb$ be the category where an object $(X,p)$ is given by a finite set $X$ equipped with
a probability measure $p$, and where a morphism $f\maps (X, p)\to (Y, q)$ 
is a measure-preserving function from $(X,p)$ to $(Y,q)$, that is, a function 
$f\maps X\to Y$ such that
\[
q_j = \sum_{i \in f^{-1}(j)} p_i 
\]
for all $j \in Y$.
\end{defn}

\noindent
We will usually write an object $(X,p)$ as $p$ for short,
and write a morphism $f \maps (X,p) \to (Y,q)$ as simply $f: p \to q$.

There is a way to take convex linear combinations of objects and
morphisms in $\FinProb$. Let $(X, p)$ and $(Y, q)$ be finite sets equipped
with probability measures, and let $\lambda \in [0, 1]$. Then there is a
probability measure
\[
\lambda p \oplus (1 - \lambda) q
\]
on the disjoint union of the sets $X$ and $Y$, whose value at a point $k$ is
given by
\[
(\lambda p \oplus (1 - \lambda) q)_k
=
\begin{cases}
\lambda p_k       &\text{if } k \in X\\
(1 - \lambda) q_k &\text{if } k \in Y.
\end{cases}
\]
Given morphisms $f: p \to p'$ and $g: q \to q'$, there is a unique morphism
\[
\lambda f \oplus (1 - \lambda) g:
\lambda p \oplus (1 - \lambda) q \longrightarrow
\lambda p' \oplus (1 - \lambda) q'
\]
that restricts to $f$ on the measure space $p$ and to $g$ on the measure space
$q$. 

The same notation can be extended, in the obvious way, to convex combinations
of more than two objects or morphisms. For example, given objects $p(1),
\ldots, p(n)$ of $\FinProb$ and nonnegative scalars $\lambda_1, \ldots,
\lambda_n$ summing to $1$, there is a new object $\bigoplus_{i = 1}^n
\lambda_i p(i)$.

Recall that the \textbf{Shannon entropy} of a probability measure $p$ on a
finite set $X$ is
\[
H(p) = -\sum_{i \in X} p_i \ln(p_i) \in [0, \infty),
\]
with the convention that $0 \ln(0) = 0$.

\begin{theorem}
\label{characterization_prob}\hypertarget{characterization_prob}{}
Suppose $F$ is any map sending morphisms in $\FinProb$ to numbers in
$[0,\infty)$ and obeying these three axioms:
\begin{enumerate}%
\item \textbf{Functoriality}:
\label{ch_prob_functor}
\begin{equation}
F(f \circ g) = F(f) + F(g)
\label{functoriality}
\end{equation}
whenever $f,g$ are composable morphisms.

\item \textbf{Convex linearity}:
\begin{equation}
F(\lambda f \oplus (1 - \lambda)g) 
= \lambda F(f) + (1 - \lambda) F(g)
\label{convex_linearity}
\end{equation}
for all morphisms $f,g$ and scalars $\lambda \in [0, 1]$.

\item \textbf{Continuity}: $F$ is continuous.
\label{ch_prob_cont}
\end{enumerate}
Then there exists a constant $c \ge 0$ such that for any morphism $f : p \to
q$ in $\FinProb$,
\[
F(f) = c(H(p) - H(q))
\]
where $H(p)$ is the Shannon entropy of $p$. Conversely, for any constant $c
\ge 0$, this formula determines a map $F$ obeying conditions
\ref{ch_prob_functor}--\ref{ch_prob_cont}.
\end{theorem}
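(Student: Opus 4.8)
The converse is a routine check. Given $c \ge 0$, define $F(f) = c(H(p)-H(q))$ for $f\maps p \to q$. Functoriality is immediate by telescoping. For convex linearity one uses
\[
H(\lambda p \oplus (1-\lambda)q) = \lambda H(p) + (1-\lambda)H(q) - \lambda\ln\lambda - (1-\lambda)\ln(1-\lambda);
\]
the logarithmic terms depend only on $\lambda$, so they cancel when one subtracts the entropy of the codomain of $\lambda f \oplus (1-\lambda)g$ from that of its domain. Continuity of $F$ follows from continuity of $H$. Finally $F$ takes values in $[0,\infty)$ because a measure-preserving function only coarse-grains: writing $p_j$ for the conditional distribution of $p$ on $f^{-1}(j)$, one has $H(p) = H(q) + \sum_j q_j H(p_j) \ge H(q)$.

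For the forward direction, the first move is to reduce $F$ to a function of a single probability measure. The one-point measure space $1$ is a terminal object of $\FinProb$, so each object $p$ has a unique morphism $!_p \maps p \to 1$; set $\phi(p) = F(!_p) \in [0,\infty)$. Since $F(\mathrm{id}) = F(\mathrm{id}\circ\mathrm{id}) = 2F(\mathrm{id})$, functoriality forces $F(\mathrm{id})=0$, and then $F$ vanishes on every isomorphism (the values $F(\sigma), F(\sigma^{-1})$ are nonnegative and sum to $0$). Hence $\phi$ is invariant under isomorphism, in particular under permutations of the underlying set. Moreover $!_q\circ f = !_p$ for any $f\maps p\to q$, so functoriality gives $F(f) = \phi(p) - \phi(q)$. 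Applying the continuity axiom to the unique function onto a point shows that $p \mapsto \phi(p)$ is continuous on each simplex of probability measures.

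It remains to identify $\phi$, and here the key point is to extract Faddeev's recursion from convex linearity. Fix $(p_1,\dots,p_n)$, put $\mu = p_1+p_2$, and let $f$ be the morphism that merges the first two points and is the identity on the rest; then $F(f) = \phi(p_1,\dots,p_n) - \phi(\mu,p_3,\dots,p_n)$. Writing the domain of $f$ as $\mu p' \oplus (1-\mu)p''$ with $p' = (p_1/\mu,\,p_2/\mu)$ on the two merged points and $p''$ the normalization of $(p_3,\dots,p_n)$ on the rest, we have $f = \mu\,!_{p'} \oplus (1-\mu)\,\mathrm{id}_{p''}$, so convex linearity yields $F(f) = \mu\,\phi(p') + (1-\mu)\cdot 0$. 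Thus
\[
\phi(p_1,\dots,p_n) = \phi(p_1+p_2,\,p_3,\dots,p_n) + (p_1+p_2)\,\phi\!\left(\tfrac{p_1}{p_1+p_2},\tfrac{p_2}{p_1+p_2}\right),
\]
and $\phi(1) = 0$. Combined with the symmetry and continuity established above, this is precisely the hypothesis of Faddeev's theorem (Section~\ref{faddeev}), which supplies a constant $c$ with $\phi(p) = c\,H(p)$ for all $p$; evaluating at any $p$ with $H(p) > 0$ forces $c \ge 0$. Therefore $F(f) = \phi(p) - \phi(q) = c(H(p) - H(q))$.

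The real content is carried by Faddeev's theorem, which we are entitled to cite; the only ingenuity required from us is the observation that the merge morphism is the convex combination $\mu\,!_{p'} \oplus (1-\mu)\,\mathrm{id}_{p''}$, which is exactly what lets convex linearity compute $F(f)$ and produce the recursion. The one step needing a little care is checking that the continuity axiom genuinely delivers continuity of $\phi$ on the relevant simplices — in fact only the two-point simplex is needed to apply Faddeev.
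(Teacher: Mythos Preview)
Your proof is correct and follows essentially the same strategy as the paper's: define $I(p)=F(!_p)$, deduce $F(f)=I(p)-I(q)$ from functoriality, then apply convex linearity to a well-chosen morphism to verify Faddeev's hypotheses and conclude. The only tactical difference is that you derive the original binary recursion of Theorem~\ref{Faddeev} via the two-point merge $\mu\,!_{p'}\oplus(1-\mu)\,\mathrm{id}_{p''}$, whereas the paper derives the equivalent strong-additivity form of Theorem~\ref{Faddeev2} via the $n$-fold morphism $\bigoplus_i p_i\,!_{q(i)}$; both routes are valid and nearly identical in spirit.
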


We need to explain condition~\ref{ch_prob_cont}. A sequence of morphisms
\[
(X_n, p(n)) \stackrel{f_n}{\to} (Y_n, q(n))
\]
in $\FinProb$ \textbf{converges} to a morphism $(X, p) \stackrel{f}{\to} (Y,
q)$ if:
\begin{itemize}%
\item for all sufficiently large $n$, we have $X_n = X$, $Y_n = Y$, and
$f_n(i) = f(i)$ for all $i \in X$;

\item $p(n) \to p$ and $q(n) \to q$ pointwise.
\end{itemize}
We define $F$ to be \textbf{continuous} if $F(f_n) \to F(f)$ whenever $f_n$ is
a sequence of morphisms converging to a morphism $f$. 

The proof of Theorem~\ref{characterization_prob} is given in
Section~\ref{proof}.  First we show how to deduce a characterization of
Shannon entropy for general measures on finite sets.

The following definition is in analogy to Definition~\ref{FinProb}:

\begin{defn}
\label{FinMeas}\hypertarget{FinMeas}{}
Let $\FinMeas$ be the category whose objects are finite sets equipped with
measures and whose morphisms are measure-preserving functions. 
\end{defn}

There is more room for maneuver in $\FinMeas$ than in $\FinProb$: we can take
arbitrary nonnegative linear combinations of objects and morphisms, not just
convex combinations. Any nonnegative linear combination can be built up from
direct sums and multiplication by nonnegative scalars, which are defined as
follows. 

\begin{itemize}%
\item For direct sums, first note that the disjoint union of two finite sets
equipped with measures is another object of the same type. We write the
disjoint union of $p, q \in \FinMeas$ as $p \oplus q$. Then, given morphisms
$f \maps p \to p'$, $g \maps q \to q'$ there is a unique morphism 
$f \oplus g \maps p \oplus q \to p' \oplus q'$ that restricts to $f$ on 
the measure space $p$ and to $g$ on the measure space $q$.

\item For scalar multiplication, first note that we can multiply a
measure by a nonnegative real number and get a new measure. So, given
an object $p \in \FinMeas$ and a number $\lambda \ge 0$ we obtain an
object $\lambda p \in \FinMeas$ with the same underlying set and with
$(\lambda p)_i = \lambda p_i$.  Then, given a morphism $f \maps p \to
q$, there is a unique morphism $\lambda f \maps \lambda p \to \lambda q$
that has the same underlying function as $f$.
\end{itemize}

\noindent
This is consistent with our earlier notation for convex linear combinations.

We wish to give some conditions guaranteeing that a map sending morphisms in
$\FinMeas$ to nonnegative real numbers comes from a multiple of Shannon
entropy. To do this we need to define the Shannon entropy of a finite set $X$
equipped with a measure $p$, not necessarily a probability measure. Define the
\textbf{total mass} of $(X, p)$ to be
\[
\|p\| = \sum_{i \in X} p_i .
\]
If this is nonzero, then $p$ is of the form $\|p\| \bar{p}$ for a unique
probability measure space $\bar{p}$. In that case we define the
\textbf{Shannon entropy} of $p$ to be $\|p\| H(\bar{p})$. If the total mass of
$p$ is zero, we define its Shannon entropy to be zero.

We can define continuity for a map sending morphisms in $\FinMeas$ to numbers
in $[0,\infty)$ just as we did for $\FinProb$, and show:

\begin{corol}
\label{characterization_meas}\hypertarget{characterization_meas}{}
Suppose $F$ is any map sending morphisms in $\FinMeas$ to numbers in
$[0,\infty)$ and obeying these four axioms: 
\begin{enumerate}%
\item \textbf{Functoriality}:
\label{ch_meas_functor}
\[
F(f \circ g) = F(f) + F(g)
\]
whenever $f,g$ are composable morphisms.

\item \textbf{Additivity}:
\begin{equation}
F(f \oplus g) = F(f) + F(g)
\label{additivity}
\end{equation}
for all morphisms $f,g$.

\item \textbf{Homogeneity}:
\begin{equation}
F(\lambda f) = \lambda F(f)
\label{multiplicativity}
\end{equation}
for all morphisms $f$ and all $\lambda \in [0,\infty)$.

\item Continuity: $F$ is continuous.
\label{ch_meas_cont}
\end{enumerate}
Then there exists a constant $c \ge 0$ such that for any morphism 
$f \maps p \to
q$ in $\FinMeas$, 
\[
F(f) = c(H(p) - H(q))
\]
where $H(p)$ is the Shannon entropy of $p$. Conversely, for any constant $c
\ge 0$, this formula determines a map $F$ obeying conditions
\ref{ch_meas_functor}--\ref{ch_meas_cont}. 
\end{corol}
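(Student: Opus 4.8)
The plan is to reduce the $\FinMeas$ statement to the already-available $\FinProb$ result (Theorem~\ref{characterization_prob}). The forward direction is the substantive part; the converse is a routine verification that $f \mapsto c(H(p) - H(q))$ is functorial (telescoping), additive (since $H$ is additive under $\oplus$ by construction), homogeneous (since $H(\lambda p) = \lambda H(p)$ by definition), and continuous.

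For the forward direction, suppose $F$ satisfies functoriality, additivity, homogeneity and continuity on $\FinMeas$. First I would restrict $F$ to morphisms between objects of total mass $1$, i.e.\ to the subcategory $\FinProb \hookrightarrow \FinMeas$. I need to check that this restriction $F|_{\FinProb}$ satisfies the three hypotheses of Theorem~\ref{characterization_prob}. Functoriality and continuity are inherited immediately. For convex linearity: given $f \maps p \to p'$ and $g \maps q \to q'$ in $\FinProb$ and $\lambda \in [0,1]$, the convex combination $\lambda f \oplus (1-\lambda) g$ in $\FinProb$ is literally $(\lambda f) \oplus ((1-\lambda) g)$ built from the $\FinMeas$ operations, so
\[
F\bigl(\lambda f \oplus (1-\lambda) g\bigr) = F(\lambda f) + F((1-\lambda) g) = \lambda F(f) + (1-\lambda) F(g)
\]
using additivity and then homogeneity. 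Hence Theorem~\ref{characterization_prob} applies and yields a constant $c \ge 0$ with $F(f) = c(H(p) - H(q))$ for all morphisms $f \maps p \to q$ between probability measures.

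It remains to upgrade this to all of $\FinMeas$. Given an arbitrary morphism $f \maps p \to q$ in $\FinMeas$, note that $f$ is measure-preserving, so $\|p\| = \|q\| =: \lambda$. If $\lambda = 0$ then both $H(p)$ and $H(q)$ are zero and one must show $F(f) = 0$; this follows because $p = 0 \cdot p$, so homogeneity gives $F(f) = F(0 \cdot f) = 0 \cdot F(f) = 0$. (One should be slightly careful that homogeneity with $\lambda = 0$ is included — it is, since $\lambda$ ranges over $[0,\infty)$.) If $\lambda \gt 0$, write $p = \lambda \bar p$ and $q = \lambda \bar q$ with $\bar p, \bar q$ probability measures, and let $\bar f \maps \bar p \to \bar q$ be $f$ with rescaled measures, a morphism in $\FinProb$. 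Then $f = \lambda \bar f$, so by homogeneity and the $\FinProb$ result,
\[
F(f) = \lambda F(\bar f) = \lambda c\bigl(H(\bar p) - H(\bar q)\bigr) = c\bigl(\lambda H(\bar p) - \lambda H(\bar q)\bigr) = c\bigl(H(p) - H(q)\bigr),
\]
the last equality by the definition of Shannon entropy for general measures. This completes the forward direction.

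The only place demanding genuine care — the step I expect to be the main obstacle — is confirming that the convex-combination operation on $\FinProb$ used in Theorem~\ref{characterization_prob} really does coincide with "$\oplus$ after scaling" in $\FinMeas$, so that additivity plus homogeneity together deliver convex linearity; once that bookkeeping is pinned down, everything else is immediate. A minor related subtlety is the degenerate mass-zero case, handled above via $\lambda = 0$ in the homogeneity axiom.
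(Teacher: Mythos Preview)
Your proof is correct and follows essentially the same route as the paper's: restrict $F$ to $\FinProb$, observe that additivity plus homogeneity yield convex linearity so Theorem~\ref{characterization_prob} applies, and then extend back to all of $\FinMeas$ by rescaling via homogeneity (with the $\lambda=0$ case handled separately). You are in fact slightly more explicit than the paper about why the restriction to $\FinProb$ satisfies the convex-linearity axiom, but otherwise the arguments match step for step.
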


\begin{proof}
Take a map $F$ obeying these axioms. Then $F$ restricts to a map on
morphisms of $\FinProb$ obeying the axioms of
Theorem~\ref{characterization_prob}. Hence there exists a constant $c
\ge 0$ such that $F(f) = c(H(p) - H(q))$ whenever $f\maps p \to q$ is
a morphism between probability measures. Now take an arbitrary
morphism $f\maps p \to q$ in $\FinMeas$. Since $f$ is measure-preserving,
$\|p\| = \|q\| = \lambda$, say.  If $\lambda \neq 0$ then $p = \lambda
\bar{p}$, $q = \lambda \bar{q}$ and $f = \lambda \bar{f}$ for some
morphism $\bar{f}\maps \bar{p} \to \bar{q}$ in $\FinProb$; then by
homogeneity,
\[
F(f) 
= \lambda F(\bar{f}) 
= \lambda c (H(\bar{p}) - H(\bar{q})) 
= c(H(p) - H(q)).
\]
If $\lambda = 0$ then $f = 0 f$, so $F(f) = 0$ by homogeneity. So $F(f) =
c(H(p) - H(q))$ in either case. The converse statement follows from the
converse in Theorem~\ref{characterization_prob}. 
\end{proof}

\hypertarget{why_shannon}{}%
\section{{Why Shannon entropy works}}
\label{why_shannon}

To prove the easy half of Theorem~\ref{characterization_prob}, we must check
that $F(f) = c(H(p) - H(q))$ really does determine a functor obeying all the
conditions of that theorem. Since all these conditions are linear in $F$, it
suffices to consider the case where $c = 1$. It is clear that $F$ is
continuous, and equation~\eqref{functoriality} is also immediate whenever 
$g\maps m \to p$, $f\maps p \to q$, are morphisms in $\FinProb$:
\[
F(f \circ g) = H(m) - H(q) = H(p) - H(q) + H(m) - H(p) = F(f) + F(g).
\]
The work is to prove equation~\eqref{convex_linearity}.

We begin by establishing a useful formula for $F(f) = H(p) - H(q)$, where as
usual $f$ is a morphism $p \to q$ in $\FinProb$. Since $f$ is
measure-preserving, we have
\[
q_j = \sum_{i \in f^{-1}(j)} p_i.
\]
So
\begin{align*}
\sum_j q_j \ln q_j &
= \sum_j \sum_{i \in f^{-1}(j)} p_i \ln q_j \\
&
= \sum_j \sum_{i \in f^{-1}(j)} p_i \ln q_{f(i)} \\
&
= \sum_i p_i \ln q_{f(i)} 
\end{align*}
where in the last step we note that summing over all $i$ that map to
$j$ and then summing over all $j$ is the same as summing over all $i$. So,
\begin{align*}
F(f) &
= - \sum_i p_i\ln p_i + \sum_j q_j \ln q_j   \\
&
= \sum_i ( -p_i \ln p_i + p_i \ln q_{f(i)}) 
\end{align*}
and thus
\begin{equation}
F(f) = \sum_{i \in X}  p_i \ln \frac{q_{f(i)}}{p_i}
\label{entropy_of_map}
\end{equation}
where the quantity in the sum is defined to be zero when $p_i = 0$. If we
think of $p$ and $q$ as the distributions of random variables $x \in X$ and $y
\in Y$ with $y = f(x)$, then $F(f)$ is exactly the conditional entropy of $x$
given $y$. So, what we are calling `information loss' is a special case of
conditional entropy.

This formulation makes it easy to check equation~\eqref{convex_linearity},
\[
F (\lambda f \oplus (1 - \lambda)g) = \lambda F(f) + (1 - \lambda) F(g),
\]
simply by applying~(\ref{entropy_of_map}) on both sides.

In the proof of Corollary~\ref{characterization_meas} (on $\FinMeas$), the
fact that $F(f) = c(H(p) - H(q))$ satisfies the four axioms was deduced from
the analogous fact for $\FinProb$. It can also be checked directly. For this
it is helpful to note that
\begin{equation}
H(p) = \|p\| \ln\|p\| - \sum_i p_i \ln(p_i).
\label{entropy_formula}
\end{equation}
It can then be shown that equation~\eqref{entropy_of_map} holds for
\emph{every} morphism $f$ in $\FinMeas$. The additivity and homogeneity axioms
follow easily.

\hypertarget{faddeev}{}%
\section{{Faddeev's theorem}}
\label{faddeev}

To prove the hard part of Theorem~\ref{characterization_prob}, we use a
characterization of entropy given by Faddeev~\cite{Fa} and nicely summarized
at the beginning of a paper by R\'enyi~\cite{R}. In order to state this result,
it is convenient to write a probability measure on the set $\{1, \dots, n\}$
as an $n$-tuple $p = (p_1, \dots, p_n)$. With only mild cosmetic changes,
Faddeev's original result states:

\begin{theorem}
\label{Faddeev}\hypertarget{Faddeev}{}
\textbf{(Faddeev)} Suppose $I$ is a map sending any probability measure on any
finite set to a nonnegative real number. Suppose that: 
\begin{enumerate}%
\item 
\label{fadd_invt}
$I$ is invariant under bijections.

\item 
\label{fadd_cts}
$I$ is continuous.

\item For any probability measure $p$ on a set of the form $\{1, \dots, n\}$,
and any number $0 \le t \le 1$, 
\label{magical_prop}
\begin{equation}
I((t p_1, (1-t)p_1, p_2, \dots, p_n)) =  
I((p_1, \dots, p_n)) + p_1 I((t, 1 - t)).
\label{faddeev_equation}
\end{equation}
\end{enumerate}
Then $I$ is a constant nonnegative multiple of Shannon entropy.
\end{theorem}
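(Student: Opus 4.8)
The natural approach is the classical route (essentially Faddeev's own): use the three axioms to reduce the problem to evaluating $I$ on uniform distributions, pin down that case by a number-theoretic argument, and then bootstrap back up to arbitrary distributions using \eqref{faddeev_equation} and continuity.

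First I would extract the elementary consequences of the axioms. Applying \eqref{faddeev_equation} to the one-point distribution $(1)$ gives $I((1)) = 0$. Setting $t = 1$ in \eqref{faddeev_equation} gives $I((p_1, 0, p_2, \dots, p_n)) = I((p_1, \dots, p_n)) + p_1\, I((1,0))$; comparing, via invariance under bijections, the two ways of inserting a zero into a two-point distribution $(p_1, p_2)$ forces $p_1\, I((1,0)) = p_2\, I((1,0))$ for all $p_1, p_2$, hence $I((1,0)) = 0$. So $I$ is \emph{expansible}: adjoining a point of measure zero changes nothing. Iterating \eqref{faddeev_equation} and permuting coordinates freely (by invariance under bijections) yields the chain rule
\[
I\bigl((p_i q^{(i)}_j)_{i,j}\bigr) = I(p) + \sum_i p_i\, I(q^{(i)}),
\]
valid for probability measures $p = (p_1, \dots, p_n)$ and $q^{(i)}$ (first for strictly positive measures, then in general by expansibility).

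Now write $u_n$ for the uniform distribution on $n$ points and $L(n) = I(u_n)$. Grouping $mn$ points into $m$ blocks of size $n$ and applying the chain rule gives $L(mn) = L(m) + L(n)$, with $L(1) = 0$. Grouping $n$ points into a block of size $n-1$ and a singleton gives $L(n) = I\bigl((\tfrac{n-1}{n}, \tfrac1n)\bigr) + \tfrac{n-1}{n}L(n-1)$ for $n \ge 2$; multiplying by $n$ and telescoping produces the closed form
\[
L(N) = \frac1N \sum_{n=2}^N n\, I\bigl((\tfrac{n-1}{n}, \tfrac1n)\bigr).
\]
Since $\bigl(\tfrac{n-1}{n}, \tfrac1n\bigr) \to (1,0)$, continuity gives $I\bigl((\tfrac{n-1}{n}, \tfrac1n)\bigr) \to I((1,0)) = 0$, and an elementary Ces\`{a}ro-type estimate then yields $L(N) = o(N)$. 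Feeding this into $L(n{+}1) - L(n) = I\bigl((\tfrac{n}{n+1}, \tfrac1{n+1})\bigr) - \tfrac1{n+1}L(n)$ shows $L(n{+}1) - L(n) \to 0$.

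The crux is now a purely arithmetic lemma: a function $L$ on the positive integers with $L(mn) = L(m) + L(n)$ and $L(n{+}1) - L(n) \to 0$ must be $L(n) = c\ln n$ for some constant $c$ (here $c \ge 0$, since $L(2) = I(u_2) \ge 0$). I expect this to be the real obstacle — the single genuinely delicate point — and it is where I would invoke the classical characterization of the logarithm among additive arithmetic functions (essentially due to Erd\H{o}s) or reproduce its short but clever proof. Granting it, what remains is routine: for a rational distribution with $p_i = m_i/M$, grouping $M$ points into blocks of sizes $m_i$ and using the chain rule gives $I(p) = L(M) - \sum_i p_i L(m_i) = -c\sum_i p_i \ln(m_i/M) = c\, H(p)$; since rational distributions are dense among finite probability distributions and both $I$ and $H$ are continuous, $I = c\, H$ in general, with $c \ge 0$ because $I$ is nonnegative.
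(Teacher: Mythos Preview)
Your proposal is correct and follows the classical Faddeev--R\'enyi route. Note, however, that the paper does not actually prove this theorem: it is quoted as a known result from Faddeev~\cite{Fa}, and the paper only outlines ``a key step in Faddeev's argument, as simplified by R\'enyi'' --- namely, that $\phi(n) = I(1/n,\dots,1/n)$ satisfies $\phi(nm)=\phi(n)+\phi(m)$ and $\phi(n{+}1)-\phi(n)\to 0$, forcing $\phi(n)=c\ln n$, after which one handles rational and then general distributions. Your sketch matches this outline exactly, while supplying the intermediate details (expansibility, the telescoping identity for $L(N)$, the Ces\`aro estimate yielding $L(N)=o(N)$) that the paper omits; the Erd\H{o}s-type lemma you flag as the crux is precisely the step the paper leaves as a black box.
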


In condition~\ref{fadd_invt} we are using the fact that given a
bijection $f\maps X \to X'$ between finite sets and a probability
measure on $X$, there is a unique probability measure on $X'$ such
that $p$ is measure-preserving; we demand that $I$ takes the same
value on both these probability measures. In condition~\ref{fadd_cts},
we use the standard topology on the simplex
\[
\Delta^{n-1} 
= 
\Bigl\{(p_1,\ldots,p_n) \in \mathbb{R}^n \:\Big|\: 
p_i \geq 0, \: \sum_i p_i = 1 \Bigr\}
\]
to put a topology on the set of probability distributions on any $n$-element
set. 

The most interesting condition in Faddeev's theorem is~\ref{magical_prop}. It
is known in the literature as the `grouping rule'~\cite[2.179]{CT} or
`recursivity'~\cite[1.2.8]{AD}.  It is a special case of `strong
additivity' \cite[1.2.6]{AD}, which already appears in the work of
Shannon~\cite{S} and Faddeev~\cite{Fa}.  Namely, suppose that $p$ is a 
probability measure on the set $\{1,\dots,n \}$.  Suppose also that for 
each $i \in \{1, \ldots, n\}$, we have a probability measure $q(i)$ on 
a finite set $X_i$. Then $p_1 q(1) \oplus \cdots \oplus p_n q(n)$ 
is again a probability measure space, and the Shannon entropy of this 
space is given by the \textbf{strong additivity} formula:
\[
H\Bigl(p_1 q(1) \oplus \cdots \oplus p_n q(n)\Bigr) 
= 
H(p) + \sum_{i=1}^n p_i H(q(i)) .
\]
This can easily be verified using the definition of Shannon entropy and
elementary properties of the logarithm. Moreover, condition~\ref{magical_prop}
in Faddeev's theorem is equivalent to strong additivity together with the
condition that $I((1)) = 0$, allowing us to reformulate Faddeev's theorem as
follows:

\begin{theorem}
\label{Faddeev2}\hypertarget{Faddeev2}{}
Suppose $I$ is a map sending any probability measure on any finite set to a
nonnegative real number. Suppose that: 
\begin{enumerate}%
\item $I$ is invariant under bijections.
\label{bij_inv}

\item $I$ is continuous.
\label{I_cont}

\item $I((1)) = 0$, where $(1)$ is our name for the unique probability measure
on the set $\{1\}$. 
\label{dirac_no_entropy}

\item For any probability measure $p$ on the set $\{1,\dots, n\}$ and
probability measures $q(1),\ldots,q(n)$ on finite sets, we have 
\label{operadic_prop}
\[
I(p_1 q(1) \oplus \cdots \oplus p_n q(n)) = I(p) + \sum_{i=1}^n
p_i I(q(i)) .
\]
\end{enumerate}
Then $I$ is a constant nonnegative multiple of Shannon entropy. Conversely,
any constant nonnegative multiple of Shannon entropy satisfies conditions
\ref{bij_inv}--\ref{operadic_prop}. 
\end{theorem}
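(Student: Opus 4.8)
The plan is to reduce Theorem~\ref{Faddeev2} to Faddeev's theorem (Theorem~\ref{Faddeev}), used as a black box; no functional equation needs to be re-solved. For the converse half, I would check directly that $cH$ obeys conditions \ref{bij_inv}--\ref{operadic_prop} for every $c \ge 0$. Invariance under bijections holds because $H(p)$ depends only on the family of point masses $(p_i)$; continuity follows from continuity of $x \mapsto x \ln x$ on $[0,1]$ (with $0\ln 0 = 0$), since $H$ is a finite sum of such terms composed with coordinate projections; and $H((1)) = -1\cdot\ln 1 = 0$ gives \ref{dirac_no_entropy}. The strong additivity axiom \ref{operadic_prop} is the identity already recorded in Section~\ref{faddeev}: expanding $-\sum_i \sum_{j \in X_i} p_i\, q(i)_j \ln\bigl(p_i\, q(i)_j\bigr)$, using $\ln(ab) = \ln a + \ln b$ and $\sum_j q(i)_j = 1$, separates it into $H(p) + \sum_i p_i H(q(i))$.

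For the forward half, suppose $I$ obeys \ref{bij_inv}--\ref{operadic_prop}. Conditions \ref{bij_inv} and \ref{I_cont} are literally conditions \ref{fadd_invt} and \ref{fadd_cts} of Theorem~\ref{Faddeev}, so the only thing to establish is the grouping rule~\eqref{faddeev_equation}. Given a probability measure $p = (p_1, \dots, p_n)$ on $\{1, \dots, n\}$ and $t \in [0,1]$, I would take $q(1)$ to be the measure $(t, 1-t)$ on a two-point set and $q(i) = (1)$ on a one-point set for $2 \le i \le n$. Then $p_1 q(1) \oplus \cdots \oplus p_n q(n)$ is a probability measure on an $(n{+}1)$-point set whose masses, in an appropriate order, are $(tp_1, (1-t)p_1, p_2, \dots, p_n)$, so invariance (\ref{bij_inv}) lets us identify $I$ of it with $I\bigl((tp_1, (1-t)p_1, p_2, \dots, p_n)\bigr)$. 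Applying \ref{operadic_prop} and then \ref{dirac_no_entropy} yields
\[
I\bigl((tp_1, (1-t)p_1, p_2, \dots, p_n)\bigr) = I(p) + p_1 I\bigl((t,1-t)\bigr) + \sum_{i=2}^n p_i I((1)) = I(p) + p_1 I\bigl((t,1-t)\bigr),
\]
which is exactly~\eqref{faddeev_equation}. Hence $I$ satisfies all the hypotheses of Theorem~\ref{Faddeev}, and we conclude that $I$ is a constant nonnegative multiple of Shannon entropy.

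The substantive point --- really the only one --- is the passage between the ``grouping rule'' form \eqref{faddeev_equation} and the ``strong additivity'' form \ref{operadic_prop} of the recursion axiom: reading one off the other needs the bijection-invariance bookkeeping above, together with \ref{dirac_no_entropy} to discard the trivial one-point summands $q(i) = (1)$. I do not expect any genuine obstacle here. Note also that the reverse implication (grouping rule plus invariance and continuity implies strong additivity) then comes for free: by Faddeev's theorem such an $I$ equals $cH$, which we have already verified obeys \ref{operadic_prop}.
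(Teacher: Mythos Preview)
Your proposal is correct and follows essentially the same route as the paper: reduce to Faddeev's theorem by specializing \ref{operadic_prop} with $q(1)=(t,1-t)$ and $q(i)=(1)$ for $i\ge 2$, then use \ref{dirac_no_entropy} to kill the trailing terms; for the converse, invoke the direct verification of strong additivity for $H$ already noted in Section~\ref{faddeev}. The only cosmetic difference is that you make the appeal to bijection-invariance explicit when identifying $p_1 q(1)\oplus\cdots\oplus p_n q(n)$ with $(tp_1,(1-t)p_1,p_2,\ldots,p_n)$, which the paper leaves implicit.
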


\begin{proof}
Since we already know that the multiples of Shannon entropy have all 
these properties, we just need to check that conditions~\ref{dirac_no_entropy}
and~\ref{operadic_prop} imply Faddeev's equation~\eqref{faddeev_equation}.
Take $p = (p_1, \ldots, p_n)$, $q(1) = (t, 1 - t)$ and $q(i) = (1)$ for $i \geq
2$: then condition~\ref{operadic_prop} gives
\[
I((t p_1, (1 - t)p_1, p_2, \ldots, p_n))
=
I((p_1, \ldots, p_n)) + p_1 I((t, 1 - t)) + \sum_{i = 2}^n p_i I((1))
\]
which by condition~\ref{dirac_no_entropy} gives Faddeev's equation.
\end{proof}

It may seem miraculous how the formula
\[
I(p_1, \dots, p_n) = - c \sum_i p_i \ln p_i
\]
emerges from the assumptions in either Faddeev's original
Theorem~\ref{Faddeev} or the equivalent Theorem~\ref{Faddeev2}. We can
demystify this by describing a key step in Faddeev's argument, as simplified
by R\'enyi~\cite{R}. Suppose $I$ is a function satisfying the assumptions of
Faddeev's result. Let
\[
\phi(n) = I \biggl( \frac{1}{n} , \dots, \frac{1}{n} \biggr)
\]
equal $I$ applied to the uniform probability measure on an
$n$-element set. Since we can write a set with $n m$ elements as a disjoint
union of $m$ different $n$-element sets, condition~\ref{operadic_prop} of
Theorem~\ref{Faddeev2} implies that
\[
\phi(n m) = \phi(n) + \phi(m).
\]
The conditions of Faddeev's theorem also imply 
\[
\lim_{n \to \infty} (\phi(n+1) - \phi(n)) = 0
\]
and the only solutions of both these equations are 
\[
\phi(n) = c \ln n .
\]
This is how the logarithm function enters. Using condition~\ref{magical_prop}
of Theorem~\ref{Faddeev}, or equivalently conditions~\ref{dirac_no_entropy}
and~\ref{operadic_prop} of Theorem~\ref{Faddeev2}, the value of $I$ can be
deduced for probability measures $p$ such that each $p_i$ is rational. The
result for arbitrary probability measures follows by continuity.

\hypertarget{proof}{}%
\section{{Proof of the main result}}
\label{proof}

Now we complete the proof of Theorem~\ref{characterization_prob}. Assume that
$F$ obeys conditions \ref{ch_prob_functor}--\ref{ch_prob_cont} in the statement
of this theorem.

Recall that $(1)$ denotes the set $\{1\}$ equipped with its unique probability
measure. For each object $p \in \FinProb$, there is a unique morphism
\[
!_p \maps p \to (1).
\]
We can think of this as the map that crushes $p$ down to a point and loses all
the information that $p$ had. So, we define the `entropy' of the measure $p$
by
\[
I(p) = F(!_p) .
\]
Given any morphism $f\maps p \to q$ in $\FinProb$, we have
\[
!_p = !_q \circ f.
\]
So, by our assumption that $F$ is functorial,
\[
F(!_p) = F(!_q) + F(f),
\]
or in other words:
\begin{equation}
F(f) = I(p) - I(q) .
\label{difference}
\end{equation}
To conclude the proof, it suffices to show that $I$ is a multiple of Shannon
entropy.

We do this by using Theorem~\ref{Faddeev2}. Functoriality implies that when a
morphism $f$ is invertible, $F(f) = 0$. Together with~\eqref{difference}, this
gives condition~\ref{bij_inv} of Theorem~\ref{Faddeev2}. Since $!_{(1)}$ is
invertible, it also gives condition~\ref{dirac_no_entropy}.
Condition~\ref{I_cont} is immediate. The real work is checking
condition~\ref{operadic_prop}.

Given a probability measure $p$ on $\{1, \ldots, n\}$ together with
probability measures $q(1), \ldots, q(n)$ on finite sets $X_1, \ldots,
X_n$, respectively, we obtain a probability measure $\bigoplus_i
p_i q(i)$ on the disjoint union of $X_1, \ldots, X_n$.  We can also
decompose $p$ as a direct sum:
\begin{equation}
p\cong \bigoplus_i p_i (1).
\label{decomposition}
\end{equation}
Define a morphism
\[
f = \bigoplus_i p_i !_{q(i)} \maps
\bigoplus_i p_i q(i) 
\to 
\bigoplus_i p_i (1).
\]
Then by convex linearity and the definition of $I$,
\[
F(f) = \sum_i p_i F(!_{q(i)}) 
= \sum_i p_i I(q(i)).
\]
But also
\[
F(f) 
= I\bigl(\bigoplus_i p_i q(i)\bigr) - I\bigl(\bigoplus p_i (1)\bigr)
= I\bigl(\bigoplus_i p_i q(i)\bigr) - I(p)
\]
by~\eqref{difference} and~\eqref{decomposition}. Comparing these two
expressions for $F(f)$ gives condition~\ref{operadic_prop} of
Theorem~\ref{Faddeev2}, which completes the proof of
Theorem~\ref{characterization_prob}.

\hypertarget{a_characterization_of_tsallis_entropy_17}{}%
\section{{A characterization of Tsallis entropy}}
\label{a_characterization_of_tsallis_entropy_17}

Since Shannon defined his entropy in 1948, it has been generalized in many
ways. Our Theorem~\ref{characterization_prob} can easily be extended to
characterize one family of generalizations, the so-called `Tsallis entropies'.
For any positive real number $\alpha$, the \textbf{Tsallis entropy of order
$\alpha$} of a probability measure $p$ on a finite set $X$ is defined as:
\[
H_\alpha(p)
=
\begin{cases}
\displaystyle
\frac{1}{\alpha - 1} 
\biggl(
1 - \sum_{i \in X} p_i^\alpha
\biggr)
& \textrm{if } \alpha \neq 1 \\[0.4cm]
\displaystyle
- \sum_{i \in X} p_i \ln p_i
& \textrm{if } \alpha = 1.
\end{cases}
\]
The peculiarly different definition when $\alpha = 1$ is explained by the fact
that the limit $\lim_{\alpha \to 1} H_\alpha(p)$ exists and equals the Shannon
entropy $H(p)$.

Although these entropies are most often named after Tsallis~\cite{T}, they and
related quantities had been studied by others long before the 1988 paper in
which Tsallis first wrote about them. For example, Havrda and
Charv\'at~\cite{HC} had already introduced a similar formula, adapted to base
$2$ logarithms, in a 1967 paper in information theory, and in 1982, Patil and
Taillie~\cite{PT} had used $H_\alpha$ itself as a measure of biological
diversity.

The characterization of Tsallis entropy is exactly the same as that of Shannon
entropy except in one respect: in the convex linearity condition, the degree
of homogeneity changes from $1$ to $\alpha$.

\begin{theorem}
\label{tsallis_characterization_prob}\hypertarget{tsallis_characterization_prob}{} 
Let $\alpha\in(0,\infty)$. Suppose $F$ is any map sending morphisms in
$\FinProb$ to numbers in $[0,\infty)$ and obeying these three axioms:
\begin{enumerate}%
\item \textbf{Functoriality}:
\label{ts_prob_functor}
\[
F(f \circ g) = F(f) + F(g)
\]
whenever $f,g$ are composable morphisms.

\item \textbf{Compatibility with convex combinations}:
\[
F(\lambda f \oplus (1-\lambda) g) = 
\lambda^{\alpha} F(f) + (1-\lambda)^{\alpha} F(g)
\]
for all morphisms $f,g$ and all $\lambda \in [0,1]$.

\item \textbf{Continuity}: $F$ is continuous.
\label{ts_prob_cont}
\end{enumerate}
Then there exists a constant $c \ge 0$ such that for any morphism 
$f \maps p \to
q$ in $\FinProb$,
\[
F(f) = c(H_{\alpha}(p) - H_{\alpha}(q))
\]
where $H_{\alpha}(p)$ is the order $\alpha$ Tsallis entropy of $p$.
Conversely, for any constant $c \ge 0$, this formula determines a map $F$
obeying conditions \ref{ts_prob_functor}--\ref{ts_prob_cont}.
\end{theorem}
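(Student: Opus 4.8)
The plan is to mimic the proof of Theorem~\ref{characterization_prob} as closely as possible, reducing to a Faddeev-style characterization of Tsallis entropy in place of Theorem~\ref{Faddeev2}. The only structural difference is that the degree of homogeneity in the convex-combination axiom is $\alpha$ rather than $1$, so I expect the logic to go through with the strong additivity / grouping rule replaced by its $\alpha$-deformed analogue. First I would record the converse (easy) direction: one checks directly, using the explicit formula for $H_\alpha$, that $F(f) = c(H_\alpha(p) - H_\alpha(q))$ is functorial (it telescopes exactly as in Section~\ref{why_shannon}, since $F(f\circ g) = H_\alpha(m) - H_\alpha(q)$), continuous, and satisfies the $\alpha$-homogeneous convex-combination rule. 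The last point amounts to the identity $H_\alpha(\lambda p \oplus (1-\lambda)q) = \lambda^\alpha H_\alpha(p) + (1-\lambda)^\alpha H_\alpha(q) + (\text{a term depending only on }\lambda,\|p\|,\|q\|)$ that cancels when we take the difference $H_\alpha(\text{source}) - H_\alpha(\text{target})$; I would verify this from the definition.

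For the forward direction, exactly as in Section~\ref{proof}, I define $I(p) = F(!_p)$ using the unique morphism $!_p\maps p \to (1)$, and functoriality immediately gives $F(f) = I(p) - I(q)$ for every $f\maps p\to q$, together with $I$ being bijection-invariant, continuous, and $I((1)) = 0$. So it remains to show that $I$ is a nonnegative multiple of $H_\alpha$. I would establish, by the same argument as in Section~\ref{proof} — decomposing $p \cong \bigoplus_i p_i(1)$ and considering $f = \bigoplus_i p_i\,!_{q(i)} \maps \bigoplus_i p_i q(i) \to \bigoplus_i p_i(1)$ and computing $F(f)$ two ways — that the $\alpha$-homogeneous convex linearity yields the \emph{deformed} recursion
\[
I\Bigl(\bigoplus_i p_i q(i)\Bigr) = I(p) + \sum_i p_i^\alpha\, I(q(i)),
\]
the only change being that convex linearity now contributes $p_i^\alpha$ instead of $p_i$ on the right. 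In particular, taking $q(1) = (t,1-t)$ and $q(i) = (1)$ for $i\ge 2$ gives the Tsallis analogue of Faddeev's equation:
\[
I\bigl((tp_1,(1-t)p_1,p_2,\dots,p_n)\bigr) = I\bigl((p_1,\dots,p_n)\bigr) + p_1^\alpha\, I\bigl((t,1-t)\bigr).
\]

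The remaining task is the Faddeev-type uniqueness statement: a bijection-invariant, continuous $I$ with $I((1)) = 0$ satisfying the deformed recursion above must be a constant multiple of $H_\alpha$. For this I would invoke (or reprove) the known characterization of Tsallis entropy due to Furuichi, which is the natural $\alpha$-deformation of Faddeev's theorem; the paper's abstract signals that this is the intended route ("Using work by Faddeev and Furuichi"). I would sketch the key step in the spirit of the Faddeev/R\'enyi argument: setting $\phi(n) = I(1/n,\dots,1/n)$, the recursion applied to a partition of an $nm$-element uniform space into $m$ blocks of size $n$ gives, when $\alpha \neq 1$, a relation of the form $\phi(nm) = \phi(n) + n^{1-\alpha}\phi(m)$ (the weights being $n$ copies of $(1/n)^\alpha = n^{-\alpha}$), whose continuous solutions are the multiples of $\frac{1}{\alpha-1}(1 - n^{1-\alpha})$; the $\alpha = 1$ case reduces to the logarithm exactly as in Section~\ref{faddeev}. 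From $\phi$ and the recursion one pins down $I$ on all rational probability measures and then extends by continuity.

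The main obstacle I anticipate is the uniqueness step — i.e., importing or supplying a clean proof of the Tsallis analogue of Faddeev's theorem, since unlike the $\alpha = 1$ case it is not as widely packaged in textbook form. If Furuichi's hypotheses are not literally the three properties above (bijection-invariance, continuity, the deformed recursion with $I((1)) = 0$), I would need the small bridging argument showing these are equivalent to his, analogous to the proof of Theorem~\ref{Faddeev2}. Everything else — the definition of $I$, the reduction $F(f) = I(p) - I(q)$, the derivation of the deformed recursion from $\alpha$-homogeneous convex linearity, and the converse direction — is a routine transcription of the Shannon case with $p_i$ systematically replaced by $p_i^\alpha$ in the places where the convex-combination axiom is applied.
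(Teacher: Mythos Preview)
Your proposal is correct and follows essentially the same route as the paper: define $I(p) = F(!_p)$, deduce $F(f) = I(p) - I(q)$ from functoriality, use the $\alpha$-homogeneous convex-combination axiom to obtain the deformed strong additivity $I(\bigoplus_i p_i q(i)) = I(p) + \sum_i p_i^\alpha I(q(i))$, and then invoke Furuichi's Tsallis-entropy analogue of Faddeev's theorem. The paper's own proof is a one-line reference to Furuichi and to the Shannon-case argument; you have simply unpacked those steps (and sketched the interior of Furuichi's result), but the strategy is identical.
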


\begin{proof}
We use Theorem~V.2 of Furuichi~\cite{Fu}.  The statement of Furuichi's theorem
is the same as that of Theorem~\ref{Faddeev} (Faddeev's theorem), except that
condition~\ref{magical_prop} is replaced by   
\[
I((t p_1, (1-t)p_1, p_2, \dots, p_n)) 
=  
I((p_1, \dots, p_n)) + p_1^\alpha I((t, 1 - t))
\]
and Shannon entropy is replaced by Tsallis entropy of order $\alpha$.  The
proof of the present theorem is thus the same as that of
Theorem~\ref{characterization_prob}, except that Faddeev's theorem is replaced
by Furuichi's.
%
%
\end{proof}

As in the case of Shannon entropy, this result can be extended to arbitrary
measures on finite sets. For this we need to define the Tsallis entropies of
an arbitrary measure on a finite set. We do so by requiring that
\[
H_\alpha(\lambda p) = \lambda^\alpha H_\alpha(p)
\]
for all $\lambda \in [0, \infty)$ and all $p \in \FinMeas$. When $\alpha = 1$
this is the same as the Shannon entropy, and when $\alpha \neq 1$, we have
\[
H_\alpha(p) =
\frac{1}{\alpha - 1} 
\biggl(
\biggl( \sum_{i \in X} p_i \biggr)^\alpha
-
\sum_{i \in X} p_i^\alpha
\biggr)
\]
(which is analogous to~\eqref{entropy_formula}). The following result is the
same as Corollary~\ref{characterization_meas} except that, again, the degree of
homogeneity changes from $1$ to $\alpha$.

\begin{corol}
\label{tsallis_characterization_meas}%
\hypertarget{tsallis_characterization_meas}{} 
Let $\alpha \in (0, \infty)$. Suppose $F$ is any map sending morphisms in
$\FinMeas$ to numbers in $[0,\infty)$, and obeying these four properties: 
\begin{enumerate}%
\item \textbf{Functoriality}:
\label{ts_meas_functor}
\[
F(f \circ g) = F(f) + F(g)
\]
whenever $f,g$ are composable morphisms.

\item \textbf{Additivity}:
\[
F(f \oplus g) = F(f) + F(g)
\]
for all morphisms $f,g$.

\item \textbf{Homogeneity of degree $\alpha$}:
\[
F(\lambda f) = \lambda^\alpha F(f)
\]
for all morphisms $f$ and all $\lambda \in [0,\infty)$.

\item \textbf{Continuity}: $F$ is continuous.
\label{ts_meas_cont}
\end{enumerate}
Then there exists a constant $c \ge 0$ such that for any morphism $
f\maps p \to q$ in $\FinMeas$, 
\[
F(f) = c(H_\alpha(p) - H_\alpha(q))
\]
where $H_\alpha$ is the Tsallis entropy of order $\alpha$. Conversely, for any
constant $c \ge 0$, this formula determines a map $F$ obeying conditions
\ref{ts_meas_functor}--\ref{ts_meas_cont}. 
\end{corol}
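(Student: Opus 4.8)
The plan is to reduce Corollary~\ref{tsallis_characterization_meas} to Theorem~\ref{tsallis_characterization_prob} in exactly the same way that Corollary~\ref{characterization_meas} was reduced to Theorem~\ref{characterization_prob}. First I would take a map $F$ satisfying the four axioms on $\FinMeas$ and restrict attention to morphisms between \emph{probability} measures. On such morphisms, additivity of $F$ combined with homogeneity of degree $\alpha$ should recover the convex-combination axiom of Theorem~\ref{tsallis_characterization_prob}: for $\lambda \in [0,1]$ and morphisms $f, g$ between probability spaces, the morphism $\lambda f \oplus (1-\lambda) g$ in $\FinMeas$ is $(\lambda f) \oplus ((1-\lambda) g)$, so $F(\lambda f \oplus (1-\lambda)g) = F(\lambda f) + F((1-\lambda)g) = \lambda^\alpha F(f) + (1-\lambda)^\alpha F(g)$. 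Functoriality and continuity restrict directly. Hence Theorem~\ref{tsallis_characterization_prob} applies and yields a constant $c \ge 0$ with $F(f) = c(H_\alpha(p) - H_\alpha(q))$ for every morphism $f \maps p \to q$ between probability measures.

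Next I would extend this to an arbitrary morphism $f \maps p \to q$ in $\FinMeas$. Since $f$ is measure-preserving, the total masses agree: $\|p\| = \|q\| = \lambda$, say. If $\lambda \neq 0$, write $p = \lambda \bar p$ and $q = \lambda \bar q$ for probability measures $\bar p, \bar q$, and let $\bar f \maps \bar p \to \bar q$ be the morphism in $\FinProb$ with the same underlying function, so that $f = \lambda \bar f$. Then homogeneity of degree $\alpha$ gives
\[
F(f) = F(\lambda \bar f) = \lambda^\alpha F(\bar f) = \lambda^\alpha c\bigl(H_\alpha(\bar p) - H_\alpha(\bar q)\bigr) = c\bigl(H_\alpha(p) - H_\alpha(q)\bigr),
\]
where the last equality uses the defining property $H_\alpha(\lambda \bar p) = \lambda^\alpha H_\alpha(\bar p)$ of the Tsallis entropy of an arbitrary finite measure. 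If $\lambda = 0$, then $p$ and $q$ are the zero measures, $f = 0 \cdot f$, and homogeneity with $\lambda = 0$ gives $F(f) = 0 = c(H_\alpha(p) - H_\alpha(q))$, since $H_\alpha$ vanishes on the zero measure. So the formula holds in all cases.

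Finally, for the converse, I would observe that $F(f) = c(H_\alpha(p) - H_\alpha(q))$ indeed satisfies the four axioms. Functoriality and continuity follow exactly as in Section~\ref{why_shannon}. For additivity and homogeneity one can argue either by deducing them from the analogous $\FinProb$ statement (the converse half of Theorem~\ref{tsallis_characterization_prob}) together with the total-mass bookkeeping above, or directly: a $\FinMeas$-analogue of formula~\eqref{entropy_of_map}, namely $F(f) = \tfrac{c}{\alpha-1}\sum_i (p_i^\alpha - p_i q_{f(i)}^{\alpha-1})$ when $\alpha \ne 1$, makes additivity immediate and homogeneity a matter of pulling out $\lambda^\alpha$.

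The only point requiring any care is the verification that the $\FinMeas$ axioms restrict correctly to the $\FinProb$ axioms on probability-preserving morphisms --- in particular that $\lambda f \oplus (1-\lambda)g$, formed as a convex combination in $\FinProb$, literally coincides with $(\lambda f)\oplus((1-\lambda)g)$ formed using the $\FinMeas$ operations, which is clear from the definitions in Section~\ref{main_result}. Beyond that, the argument is a routine transcription of the proof of Corollary~\ref{characterization_meas}, with every occurrence of the scalar $\lambda$ in a homogeneity step replaced by $\lambda^\alpha$; I do not anticipate any genuine obstacle.
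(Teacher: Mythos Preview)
Your proposal is correct and follows exactly the route the paper takes: the paper's proof simply says that the corollary follows from Theorem~\ref{tsallis_characterization_prob} in the same way that Corollary~\ref{characterization_meas} follows from Theorem~\ref{characterization_prob}, and you have spelled out precisely that transcription with $\lambda$ replaced by $\lambda^\alpha$ in the homogeneity steps. In fact you give more detail than the paper, including the explicit check that additivity plus degree-$\alpha$ homogeneity yield the convex-combination axiom of Theorem~\ref{tsallis_characterization_prob}.
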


\begin{proof}
This follows from Theorem~\ref{tsallis_characterization_prob} in just the same
way that Corollary~\ref{characterization_meas} follows from
Theorem~\ref{characterization_prob}.
\end{proof}

\section*{Acknowledgements}
We thank the denizens of the $n$-Category Caf\'e, especially David
Corfield, Steve Lack, Mark Meckes and Josh Shadlen, for encouragement and
helpful suggestions. Tobias Fritz is supported by the EU STREP QCS. Tom
Leinster is supported by an EPSRC Advanced Research Fellowship.

\bibliographystyle{mdpi}
\makeatletter
\renewcommand\@biblabel[1]{#1. }
\makeatother

\end{document}